\let\@twosidetrue\@twosidefalse\let\@mparswitchtrue\@mparswitchfalse\makeatother %
\newcommand{\ifarxivelse}[2]{\iftoggle{arxiv}{#1}{\cite[#2]{arxivversion}}}
\pgfplotsset{compat=1.17}
\tikzstyle{state}+=[minimum size = 6mm, inner sep=0,outer sep=1]
\colorlet{disabled}{lightgray}
\tikzstyle{state}=[draw,rectangle,inner sep=5pt,rounded corners=2pt]
\tikzstyle{action}=[font=\small,inner sep=0pt,outer sep=3pt]
\tikzstyle{actionnode}=[circle,draw=black,fill=black,minimum size=1mm,inner sep=0,outer sep=0]
\tikzstyle{actionedge}=[draw,-]
\tikzstyle{prob}=[font=\scriptsize,inner sep=0pt,outer sep=1pt]
\tikzstyle{probedge}=[draw,->]
\tikzstyle{directedge}=[draw,->]
\tikzset{chainarrow/.tip={Stealth[length=3pt]}}
\tikzset{>=chainarrow}
\tikzstyle{discontinuity-limit}=[circle,inner sep=0pt,minimum size=3pt,fill=white,draw=black]
\tikzstyle{discontinuity-value}=[circle,inner sep=0pt,minimum size=3pt,fill=black]
\tikzstyle{discontinuity-limit-small}=[circle,inner sep=0pt,minimum size=1pt,fill=white,draw=black,solid]
\tikzstyle{discontinuity-value-small}=[circle,inner sep=0pt,minimum size=1pt,fill=black,solid]
\Crefname{figure}{Fig.}{Figs.}
\crefname{figure}{Fig.}{Figs.}
\Crefname{tabular}{Tab.}{Tabs.}
\crefname{tabular}{Tab.}{Tabs.}
\Crefname{section}{Sec.}{Sects.}
\crefname{section}{Sec.}{Sects.}
\Crefname{appendix}{App.}{Apps.}
\crefname{appendix}{App.}{Apps.}
\Crefname{equation}{Eq.}{Eqs.}
\crefname{equation}{Eq.}{Eqs.}
\Crefname{example}{Ex.}{Exs.}
\crefname{example}{Ex.}{Exs.}
\newcommand{\RR}{\ensuremath{\mathbb{R}}\xspace}  %
\newcommand{\RRpos}{\ensuremath{\RR_{\geqslant0}}\xspace}  %
\newcommand{\tool}[1]{\textsc{#1}}
\newcommand{\lang}[1]{\textsc{#1}}
\newcommand{\toolset}{\tool{Modest Toolset}\xspace}
\newcommand{\eg}{e.g.\ }
\newcommand{\ie}{i.e.\ }
\newcommand{\wrt}{w.r.t.\xspace}
\newcommand{\set}[1]{\ensuremath{\{\,#1\,\}}}
\newcommand{\tuple}[1]{\ensuremath{\langle #1 \rangle}}
\newcommand{\defeq}{\mathrel{\vbox{\offinterlineskip\ialign{\hfil##\hfil\cr{\tiny \rm def}\cr\noalign{\kern0.30ex}$=$\cr}}}}
\DeclareMathOperator{\Dist}{\mathcal{D}}
\DeclareMathOperator{\CVaR}{CVaR}
\DeclareMathOperator{\ERisk}{ERisk}
\DeclareMathOperator{\Quant}{Q}
\newcommand{\stochdom}{\precsim_\mathit{SD}}
\newcommand{\domain}{\mathit{dom}}
\newcommand{\pz}{\phantom{0}}
\g@addto@macro\normalsize{%
  \setlength\abovedisplayskip{2pt}%
  \setlength\belowdisplayskip{2pt}%
  \setlength\abovedisplayshortskip{-2pt}%
  \setlength\belowdisplayshortskip{2pt}%
}%
\def\orcidID#1{\textsuperscript{\,\smash{\protect\raisebox{-1.25pt}{\href{http://orcid.org/#1}{\protect\includegraphics[scale=.8]{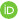}}}}}}
\begin{document}
\title{%
Statistical Model Checking Beyond Means:
Quantiles, CVaR, and the DKW Inequality%
\thanks{
This work was supported
by the DFG through the Cluster of Excellence EXC 2050/1 (CeTI, project ID 390696704, as part of Germany's Excellence Strategy)
and the TRR 248 (see \url{perspicuous-computing.science}, project ID 389792660),
by the ERC Starting Grant DEUCE (101077178),
by the European Union's Horizon 2020 research and innovation programme under Marie Sk{\l}odowska-Curie grant agreement 101008233 (MISSION),
by the Interreg North Sea project STORM\_SAFE,
and
by NWO VIDI grant VI.Vidi.223.110 (TruSTy).
}%
\iftoggle{arxiv}{\\ (extended version)}{}%
}
\titlerunning{SMC Beyond Means: Quantiles, CVaR, and the DKW Inequality}
\author{
\mbox{Carlos E.\ Budde\inst{1}\orcidID{0000-0001-8807-1548}}
\and
\mbox{Arnd Hartmanns\inst{2}\orcidID{0000-0003-3268-8674}}
\and
\mbox{Tobias Meggendorfer\inst{3}\orcidID{0000-0002-1712-2165}}
\and
\mbox{Maximilian Weininger\inst{4}\orcidID{0000-0002-0163-2152}}
\and
Patrick Wienh\"oft\inst{5,6}\orcidID{0000-0001-8047-4094}%
}
\authorrunning{C.E. Budde, A. Hartmanns, T. Meggendorfer, M. Weininger, P. Wienhöft}
\institute{%
Technical University of Denmark, Lyngby, Denmark
\and
University of Twente, Enschede, The Netherlands
\and
Lancaster University Leipzig, Leipzig, Germany
\and
Ruhr University Bochum, Bochum, Germany
\and
Dresden University of Technology, Germany
\and
Centre for Tactile Internet with Human-in-the-Loop (CeTI), Dresden, Germany
}

\maketitle

\begin{abstract}
Statistical model checking (SMC) randomly samples probabilistic models to approximate quantities of interest with statistical error guarantees.
It is traditionally used to estimate probabilities and expected rewards, \ie means of different random variables on paths.
In this paper, we develop methods using the Dvoretzky-Kiefer-Wolfowitz-Massart inequality (DKW) to extend SMC beyond means to compute quantities such as quantiles, conditional value-at-risk, and entropic risk.
The DKW provides confidence bounds on the random variable's entire cumulative distribution function, a much more versatile guarantee compared to the statistical methods prevalent in SMC today.
We have implemented support for computing new quantities via the DKW in the \tool{modes} simulator of the \toolset.
We highlight the implementation and its versatility on benchmarks from the quantitative verification literature.
\end{abstract}

\section{Introduction}

Statistical model checking (SMC)~\cite{AP18,YS02,Kre16,LLTYSG19} avoids the state space explosion problem of classic probabilistic model checking approaches (PMC)~\cite{BAFK18,Bai16} that explore and numerically analyse a model's entire state space~\cite{HJQW23}:
SMC instead \emph{samples} $k$ random paths from the model to \emph{estimate} the value of the quantity of interest.
As a simulation-based approach, it applies to any effectively executable model, including non-Markovian~\cite{DM18} and hybrid~\cite{EGF15,PR17} ones.
An SMC result comes with a \emph{statistical} correctness guarantee, often expressed as a \emph{confidence interval} $[l, u]$ that contains the true result $(1-\delta) \cdot 100\,\%$ of the times~\cite{BHMWW25}.

The most fundamental quantities computed by PMC and SMC are reachability probabilities and expected rewards~\cite{BHMWW25,HJQW23}.
SMC estimates these quantities using statistical methods like the Clopper-Pearson confidence interval~\cite{CP34} for probabilities (\ie binomial proportions) and Hoeffding's inequality~\cite{Hoe63} for means of bounded distributions, or compares them to each other~\cite{DLLMW11} or to thresholds using Wald's sequential probability ratio test~\cite{Wal45}.
In the past decade, PMC has been extended to compute several other quantities of interest, such as quantiles/percentiles/value-at-risk~\cite{UB13,KSBD15,RRS17}, conditional value at risk~\cite{KM18}, and entropic risk~\cite{BCMP24}.
However, the application of SMC has so far been limited to probabilities and expected rewards, \ie only the \emph{means} of distributions associated to different random variables on sampled paths.
To the best of our knowledge, no SMC approaches or tools support quantities other than means yet.

In this paper, we show how to extend SMC to estimate non-mean quantities using the Dvoretzky-Kiefer-Wolfowitz-Massart inequality (DKW)~\cite{DKW56,Mas90}.
The DKW provides a sound \emph{simultaneous confidence band} around the cumulative distribution function (cdf), \ie upper and lower bound functions completely enveloping the (unknown) cdf $(1-\delta) \cdot 100\,\%$ of the times %
(see \Cref{fig:example}).
This is a stronger statement compared to the currently-used statistical methods for estimation mentioned above, as it applies to the entire cdf rather than a single point or pointwise.
From the DKW, we can again derive a confidence interval for the mean~\cite{BHMWW25,And69}, but equally (and simultaneously) obtain confidence intervals on other quantities as well.
We show how to do so in particular for higher moments, quantiles, conditional value-at-risk, and entropic risk.
We have implemented these DKW-based computations in the \tool{modes} statistical model checker~\cite{BDHS20}, part of the \toolset\cite{HH14}.
\tool{modes} can now also export the empirical cdf and DKW confidence band for plotting and further analysis by the user.
We highlight our implementation and its versatility using several models from the Quantitative Verification Benchmark Set (QVBS)~\cite{HKPQR19} in \Cref{sec:Implementation}.

\section{Preliminaries}\label{sec:prelims}

A \emph{probability distribution} over a non-empty, countable set $S$ is a function $\mu\colon S \to [0, 1]$ such that $\sum_{s \in S} \mu(s) = 1$.
The set of all distributions over $S$ is denoted by $\Dist(S)$.
The \emph{cumulative distribution function} (cdf) of a random variable $X$ is given by $F_X(x) \defeq \mathbb{P}(X \leq x)$.
A random variable $X$ \emph{stochastically dominates} another random variable $Y$, written $Y \stochdom X$, if $F_Y(x) \geq F_X(x)$ for all $x$ (\ie for any $x$, obtaining a value less than or equal to $x$ is more likely for $Y$ than for $X$; intuitively, $X$ yields larger values).
If $Y \stochdom X$, then $\mathbb{E}(Y) \leq \mathbb{E}(X)$.

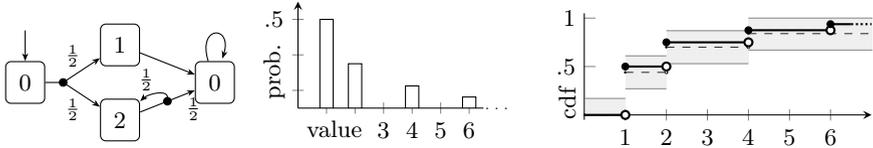
\begin{figure}[t]
    \centering%
\begin{minipage}{.275\textwidth}
\centering
\begin{tikzpicture}[auto,xscale=1.25]
    \node[state,initial above,initial text=] at (0,0) (init) {$0$};
    \node[state] at (1,.5) (s1) {$1$};
    \node[state] at (1,-.5) (s2) {$2$};
    \node[state] at (2,0) (sf) {$0$};
    \node[actionnode] at (.4,0) (inita) {};
    \node[actionnode] at (1.5,-.25) (s2a) {};

    \path[actionedge]
        (init) edge (inita)
        (s2) edge (s2a)
    ;
    \path[directedge]
        (s1) edge (sf)
        (sf) edge[loop above] (sf)
    ;
    \path[probedge]
        (inita) edge node[prob] {$\tfrac{1}{2}$} (s1)
        (inita) edge[swap] node[prob] {$\tfrac{1}{2}$} (s2)
        (s2a) edge[out=90,in=45,swap] node[prob] {$\tfrac{1}{2}$} (s2)
        (s2a) edge node[prob,swap] {$\tfrac{1}{2}$} (sf)
    ;
\end{tikzpicture}
\end{minipage}%
\begin{minipage}{.3\textwidth}%
\centering%
\begin{tikzpicture}
    \begin{axis}[width=1.1\textwidth,height=3cm,
        xmin=0, xmax=6.5, ymin=0, ymax=0.6, ybar,
        bar width=5pt, bar shift=0pt,
        axis x line=middle,
        axis y line=middle,
        xtick={1,2,3,4,5,6},
        ytick={0.1,0.3,0.5},
        yticklabels={,,.5},
        xticklabels={,,3,4,5,6,},
        x label style={at={(axis description cs:0,0)},anchor=north west,font=\small},
        y label style={at={(axis description cs:0,0)},rotate=90,anchor=south west,font=\small},
        xlabel={\strut value},
        ylabel={\strut prob.},
        clip mode=individual,
        x axis line style=-
    ]
        \addplot [draw=black,fill=white] coordinates {(1,0.5) (2,0.25) (4,0.125) (6,0.0625)};
        \node at (axis description cs:1.08,-.0) {\tiny$\dots$};
    \end{axis}
\end{tikzpicture}
\end{minipage}%
\begin{minipage}{.4\textwidth}%
\centering%
\begin{tikzpicture} %
    \begin{axis}[axis x line=middle, axis y line*=middle,
            height=3cm, width=1.1\textwidth,
            x label style={at={(axis description cs:0,0)},anchor=north west,font=\small},
            y label style={at={(axis description cs:0,0)},anchor=south west,font=\small},
            xmin=0,xmax=7,xtick={1,2,3,4,5,6},xlabel={},
            ymin=-.05,ymax=1.05,ytick={0.5,1},yticklabels={.5,1},ylabel={\hspace{-3pt}cdf}]

\draw[dashed] (axis cs:0,0.00) node[discontinuity-value-small] {} -- (axis cs:1,0.00) node[discontinuity-limit-small] {};
\draw[dashed] (axis cs:1,0.44) node[discontinuity-value-small] {} -- (axis cs:2,0.44) node[discontinuity-limit-small] {};
\draw[dashed] (axis cs:2,0.70) node[discontinuity-value-small] {} -- (axis cs:4,0.70) node[discontinuity-limit-small] {};
\draw[dashed] (axis cs:4,0.84) node[discontinuity-value-small] {} -- (axis cs:8,0.84) node[discontinuity-limit-small] {};
\draw[dashed] (axis cs:8,0.92) node[discontinuity-value-small] {} -- (axis cs:16,0.92) node[discontinuity-limit-small] {};

\draw[gray] (axis cs:0,0.17) -- (axis cs:1,0.17);
\draw[gray] (axis cs:1,0.61) -- (axis cs:2,0.61);
\draw[gray] (axis cs:2,0.87) -- (axis cs:4,0.87);
\draw[gray] (axis cs:4,1.00) -- (axis cs:8,1.00);
\draw[gray] (axis cs:8,1.00) -- (axis cs:16,1.00);

\draw[gray] (axis cs:0,0.00) -- (axis cs:1,0.00);
\draw[gray] (axis cs:1,0.27) -- (axis cs:2,0.27);
\draw[gray] (axis cs:2,0.53) -- (axis cs:4,0.53);
\draw[gray] (axis cs:4,0.67) -- (axis cs:8,0.67);
\draw[gray] (axis cs:8,0.75) -- (axis cs:16,0.75);

\fill[lightgray,fill opacity=.25] (axis cs:0,0.00) -- (axis cs:1,0.00) -- (axis cs:1,0.17) -- (axis cs:0,0.17);
\fill[lightgray,fill opacity=.25] (axis cs:1,0.27) -- (axis cs:2,0.27) -- (axis cs:2,0.61) -- (axis cs:1,0.61);
\fill[lightgray,fill opacity=.25] (axis cs:2,0.53) -- (axis cs:4,0.53) -- (axis cs:4,0.87) -- (axis cs:2,0.87);
\fill[lightgray,fill opacity=.25] (axis cs:4,0.67) -- (axis cs:8,0.67) -- (axis cs:8,1.00) -- (axis cs:4,1.00);
\fill[lightgray,fill opacity=.25] (axis cs:8,0.75) -- (axis cs:16,0.75) -- (axis cs:16,1.00) -- (axis cs:8,1.00);

\draw[thick] (axis cs:0,0) -- (axis cs:1,0) node[discontinuity-limit] {};
\draw[thick] (axis cs:1,.5) node[discontinuity-value] {} -- (axis cs:2,.5) node[discontinuity-limit] {};
\draw[thick] (axis cs:2,.75) node[discontinuity-value] {} -- (axis cs:4,.75) node[discontinuity-limit] {};
\draw[thick] (axis cs:4,.875) node[discontinuity-value] {} -- (axis cs:6,.875) node[discontinuity-limit] {};
\draw[thick] (axis cs:6,.9375) node[discontinuity-value] {} -- (axis cs:6.5,.9375);
\draw[thick,densely dotted] (axis cs:6.5,.9375) -- (axis cs:7,.9375);
    \end{axis}
\end{tikzpicture}
\end{minipage}%
\caption{
    Example of a DTMC (left) together with the probability distribution over possible reward outcomes (middle) and the corresponding cdf (right, solid line).
    The states in the DTMC are (only) labelled by their rewards.
    The right figure also includes an empirical cdf (dashed) and corresponding confidence band (gray) obtained from the DKW inequality (with $\delta = 0.1$ and $k = 50$).
}\label{fig:example}
\end{figure} %

\begin{definition}
\label{def:DTMC}
A \emph{discrete-time Markov chain} (DTMC) is a tuple $\tuple{S, R, T, s_I}$ of a finite set of \emph{states} $S$, a \emph{reward} function $R \colon S \to \RRpos$, an \emph{initial state} $s_I \in S$, and a \emph{transition} function $T\colon S \to \Dist(S)$ mapping each state to a probability distribution over successor states.
A (finite) \emph{path} $\pi$ is (a prefix of) an infinite sequence $\pi = s_0\,s_1\ldots \in S^\omega$ such that $s_0 = s_I$ and $\forall i\colon T(s_i)(s_{i+1}) > 0$.
\end{definition}
See \cref{fig:example} (left) for an example of a DTMC.
A DTMC induces a unique probability measure $\mathbb{P}$ over sets of paths that, intuitively, corresponds to multiplying the probabilities along the path (see \eg\cite[Chapter~10]{BK08}).

\subsubsection{Properties.}
Properties typically consist of two parts:
First, a random variable $X$ assigning a value to each path.
For our results, the choice of $X$ is largely irrelevant; we only require it to yield non-negative finite values. %
Concretely, we consider \emph{total} and \emph{reachability rewards}, \ie $\mathsf{TR}(\pi) = \sum_{i = 0}^\infty R(\pi_i)$ and the same sum cut off at the first goal state, respectively; see \cite[Sec.~2]{BHMWW25} for details, and \cref{fig:example} (middle/right) for the distribution and cdf of $\mathsf{TR}$ on the example DTMC.

Second, a property comes with
an \emph{aggregation function} to \enquote{summarize} $X$ into a single value, traditionally the expected value/mean $\mathbb{E}(X)$ (\wrt $\mathbb{P}$).
Recently, alternative aggregations have gained popularity, for example
\begin{itemize}
    \item higher \emph{moments} (around 0), which are of the form $\mathbb{E}(X^n)$ for $n > 1$;
    \item the $t$-\emph{quantile} (a.k.a.\ the \emph{value-at-risk}) for $t \in (0,1)$, which is the smallest value $v$ such that $X$ is less than or equal to $v$ with probability $t$~\cite{KSBD15,RRS17}:
    $$\Quant_t(X) \defeq \inf\,\set{v \mid \mathbb{P}(X \leq v) \geq t};$$
    \item the \emph{conditional value-at-risk} (a.k.a.\ expected shortfall, expected tail loss, average value-at-risk), which is the expectation over the $t$-quantile, \ie
    $$\CVaR_t(X) \defeq \frac{1}{t} (P \cdot \mathbb{E}[X \mid X < v] + (t - P) \cdot v)$$
    where $t \in (0,1)$, $v = \Quant_t(X)$ and $P = \mathbb{P}(X < v)$~\cite{rockafellar2002conditional,KM18}; and
    \item the \emph{entropic risk}, which with $\gamma > 0$ is~\cite{follmer2002convex,BCMP24}
    $$\ERisk_\gamma(X) \defeq - \frac{1}{\gamma} \log(\mathbb{E}(e^{-\gamma X})).$$
\end{itemize}
We illustrate these for the DTMC of \cref{fig:example} in \ifarxivelse{\Cref{sec:app-example}}{App.~A}.
Ad\-di\-tionally, as in \cite{BHMWW25}, we distinguish whether $X$ has a known upper bound (i.e.\ some $U$ such that $\mathbb{P}(X \leq U) = 1$, the \emph{bounded} case) or not (the \emph{general} case).

\subsubsection{Statistical model checking}\!%
is, at its core, Monte Carlo simulation for formal models and properties:
randomly generate a (predetermined) number $k$ of paths, or \emph{simulations}, from the model that give rise to samples $X_1, \ldots, X_k$ of the random variable~$X$; and from that draw statistical conclusions on the property.
While PMC approaches exist for all of the aforementioned properties, SMC so far exclusively focused on means as follows:
compute the empirical mean
$$\hat{X} \defeq \frac{1}{k}\sum_{i=1}^{k} X_i,$$
and perform a \emph{statistical evaluation} to obtain a confidence interval $I = [l, u] \ni \hat{X}$ at a predetermined \emph{confidence level} $\delta$, so that with (a priori) probability $1 - \delta$ we have $\mathbb{E}(X) \in I$.
That is, if we repeat the SMC procedure $m$ times to obtain confidence intervals $I_1, \ldots, I_m$, we may find some of them (up to $\delta\cdot100\%$ on average) incorrect, i.e.\ $\mathbb{E}(X) \notin I_i$ for some $i$.
Occasionally obtaining an ``incorrect'' result is the nature of a statistical approach based on sampling.
In this work, we develop statistical methods for other aggregations beyond the~mean.

\section{Statistical Guarantees Beyond Means}\label{sec:Estimates}
Before we introduce our approach, we formalise the exact kind of guarantees we aim to give.
Observe that simply returning confidence intervals $[0, \infty]$ is always sound.
However, we also want SMC procedures to yield \enquote{small} intervals.
To formalize this requirement, we say a procedure yields \emph{effective bounds} if (i)~it produces correct intervals with high confidence, and (ii)~for a large enough number $k$ of samples, the intervals produced by the procedure are smaller than any $\varepsilon$ and still correct with high confidence; see \cref{def:bounds} for the formal definition.
We note that this is related to the notion of \emph{consistent} estimators~\cite{Ame85} from statistics, as the mid-point of effective intervals is a consistent estimator.
However, we pose a stronger requirement since we require correct bounds to be produced.

\begin{definition}[effective bounds] \label{def:bounds}
	Let $X$ be a random variable and $\mathcal{F}$ an aggregator, mapping random variables to real numbers.
	An SMC procedure $\mathcal{A}$ yields \emph{effective bounds} on $\mathcal{F}(X)$ if, for any confidence $\delta > 0$, the following two conditions hold:
	(i)~For a collection of independent samples $\Xi$ drawn from $X$, we have $\mathbb{P}(\mathcal{F}(X) \in \mathcal{A}(\Xi, \delta)) \geq 1 - \delta$.
	(ii)~For any precision $\varepsilon > 0$, there exists a threshold $k_0$ such that for a collection of independent samples $\Xi$ drawn from $X$ with $|\Xi| \geq k_0$, we have $\mathbb{P}(\mathcal{F}(X) \in \mathcal{A}(\Xi, \delta) \land |\mathcal{A}(\Xi, \delta)| \leq \varepsilon) \geq 1 - \delta$.
\end{definition}

\begin{remark}\label{rem:dual}
Some works consider the dual problem of gathering enough samples until a given precision is reached.
They seek so-called \emph{probably approximately correct} (PAC) guarantees:
Given confidence level $\delta$ and precision $\varepsilon$, gather enough samples to return $I$ with %
$|I| \leq 2\varepsilon$.
We focus on deriving intervals given a fixed $k$, %
and in \ifarxivelse{\Cref{sec:Sequential}}{App.~B} describe how our methods extend to the dual problem.
\end{remark}
As already observed in \cite{BHMWW25}, obtaining two-sided bounds sometimes is infeasible (depending on the nature of the DTMC, random variable $X$, and aggregator $\mathcal{F}$).
However, we may still be able to derive statistically sound, \enquote{converging} lower bounds.
Thus, we extend the definition of \enquote{limit-PAC} from \cite[Def.~3]{BHMWW25} and say an SMC procedure yields \emph{effective lower bounds} if the value it produces is, with high confidence, (i)~always a lower bound and (ii)~close to the true value if given enough samples.
Formally:

\begin{definition}[effective lower bounds]\label{def:lower_bounds}
	Let $X$ be a random variable and $\mathcal{F}$ an aggregator, mapping random variables to real numbers.
	An SMC procedure $\mathcal{A}$ yields \emph{effective lower bounds} on $\mathcal{F}(X)$ if, for any confidence $\delta > 0$, the following two conditions hold:
	(i)~For a collection of independent samples $\Xi$ drawn from $X$, we have $\mathbb{P}(\mathcal{A}(\Xi, \delta) \leq \mathcal{F}(X)) \geq 1 - \delta$.
	(ii)~For any precision $\varepsilon > 0$, there exists a threshold $k_0$ such that for a collection of independent samples $\Xi$ drawn from $X$ with $|\Xi| \geq k_0$, we have $\mathbb{P}(\mathcal{F}(X) - \varepsilon \leq \mathcal{A}(\Xi, \delta) \leq \mathcal{F}(X)) \geq 1 - \delta$.
\end{definition}

\subsection{DKW: The Dvoretzky-Kiefer-Wolfowitz-Massart Inequality}\label{sec:dkw}
Our key to obtain effective bounds is the \emph{Dvoretzky-Kiefer-Wolfowitz-Massart inequality} (DKW), which relates the cdf of the unknown distribution of $X$ to the \emph{empirical cdf} $\hat{F}(x)=\frac{1}{k} |\set{ X_i \mid X_i \leq x }|$ by
\begin{equation*}
\mathbb{P}\Bigl({\textstyle\sup_{x\in\mathbb R}}\,|\hat{F}(x) - F_X(x)| > \Delta \Bigr) \le \delta \quad \text{where $\Delta = \sqrt{\log(\delta / 2) / (-2k)}$.}
\end{equation*}
Note that $F_X$ is fixed but unknown, while $\hat{F}$ depends on the samples drawn from $X$.
Intuitively, the DKW gives a \emph{confidence band} in which the true cdf lies with high probability; see \Cref{fig:example} (right) for an illustration.
There, $\hat{F}$ is drawn dashed, and the gray area around $\hat{F}$ shows the confidence band (with width $2 \Delta$).
We refer to the bounds of this band as $\underline{F}(x) \defeq \min\,\set{\hat{F}(x) + \Delta,\, 1}$ and $\overline{F}(x) \defeq \max\,\set{0, \,\hat{F}(x) - \Delta}$, respectively.
We denote the random variables that $\underline{F}$, $\hat{F}$, and $\overline{F}$ correspond to as $\underline{X}$, $\hat{X}$, and $\overline{X}$, respectively.
Clearly, $\underline{X} \stochdom \hat{X} \stochdom \overline{X}$ and the DKW implies that $\underline{X} \stochdom X \stochdom \overline{X}$ with high confidence.
In general, $\underline{X}$ and $\overline{X}$ yield $0$ and $\infty$ with probability $\Delta$, respectively.
In the bounded case, we have $F_X(U) = 1$ and hence $\overline{X}$ would instead yield $U$ with probability~$\Delta$.
In \cite{BHMWW25}, this is used to derive (lower and upper, in the bounded case, and lower, in the general case) bounds on expected rewards.

\subsection{Obtaining Effective Bounds}
\label{sec:ObtainingBounds}

As it turns out, computing the aggregations for $\underline{X}$ (and $\overline{X}$) gives effective (lower) bounds for all considered properties. %
We implicitly assume that the DKW condition holds and prove (below) that we then obtain correct (and converging) estimates.
This means in general we get such estimates with high confidence.
Moreover, as all results only depend on the DKW condition holding, we can give guarantees on \emph{all} aggregations simultaneously, without splitting the confidence budget, which is particularly useful for \eg multi-objective queries~\cite{KSBD15,RRS17,KM18}.

\paragraph{Moments.}
For higher-order moments, note that $Y \defeq X^n$ is non-negative and has finite expectation if $X$ satisfies these assumptions.
Thus, the results of \cite[Thm.~1]{BHMWW25} are directly applicable, which state that then the DKW yields effective lower bounds in the general case. 
In the bounded case, we naturally obtain effective bounds by direct application of the DKW (see \Cref{sec:dkw}).

\paragraph{Quantiles.}
By their definition, quantiles are monotone \wrt stochastic dominance, \ie if $Y \stochdom X$, then $\Quant_t(Y) \leq \Quant_t(X)$.
Thus, we also have $\Quant_t(\underline{X}) \leq \Quant_t(X) \leq \Quant_t(\overline{X})$, ensuring correctness of the computed values.
While we can always obtain lower and upper bounds, even in the general case (by choosing $k$ so that $t \in (\Delta, 1 -\Delta]$), only the lower bounds may be effective:
Consider an $X$ with distribution $\{1 \mapsto \frac{1}{2}, 2 \mapsto \frac{1}{2}\}$.
We have $\Quant_{0.5}(X) = 1$, but any sound statistical upper bound on the cdf of $X$ will have $\overline{F}(1) < F_X(1) = 0.5$, and thus always yield a $0.5$-quantile of $2$.
This is a fundamental property of quantiles: they are not continuous \wrt small changes in the distribution.
This already happens for the simple example in \cref{fig:example}, as we illustrate in \ifarxivelse{\cref{sec:app-example}}{App.~A}.
Thus, in general we cannot provide effective bounds.
$X$ is always discrete for DTMC as per \Cref{def:DTMC} and $X = \mathsf{TR}$; we can have non-discrete $X$ if we allow \eg continuously-distributed random rewards, use other models like continuous-time Markov chains (CTMCs), or other properties.
Then, if $X$ is continuous or, at least, if $F_X$ is continuous at $\Quant_t(X)$, we get effective bounds.%

\paragraph{Conditional value-at-risk.}
$\CVaR$ is a \emph{distortion risk measure} (as is $\Quant_t$), which are monotone \wrt stochastic dominance \cite{wirch2001distortion}.
Thus we again immediately get $\CVaR_t(\underline{X}) \leq \CVaR_t(X) \leq \CVaR_t(\overline{X})$.
In contrast to general expectations, the bounded and general case do not differ:
By assumption, we have $X < \infty$, hence there exists $T$ such that $F_X(T) > 1 - \frac{t}{2}$.
For a large enough $k$, we have $\Delta < \frac{t}{2}$, and $\overline{F}(T) \geq 1 - t$.
Then, we know (with high confidence) that $X \leq T$ with probability $t$, i.e.\ $\Quant_t(\overline{X}) \leq T$ and therefore $\CVaR_t(\overline{X}) \leq T < \infty$.
Thus, we can directly bound $|\!\CVaR_t(\underline{X}) - \CVaR_t(\overline{X})|$ by $T \cdot 2 \Delta$, which goes to $0$ for large enough $k$.
Together, we obtain effective bounds in the general case.

\paragraph{Entropic risk.}
First, observe that if $Y \stochdom X$, then $e^{-\gamma X} \stochdom e^{-\gamma Y}$ (the order reverses as $e^{-\gamma x}$ is decreasing).
Consequently, $\mathbb{E}(e^{-\gamma X}) \leq \mathbb{E}(e^{- \gamma Y})$, and thus $\ERisk_\gamma(Y) \leq \ERisk_\gamma(X)$ (recall that $\ERisk = - {1}/{\gamma} \cdot \dots$).
Hence, we get $\ERisk_\gamma(\underline{X}) \leq \ERisk_\gamma(X) \leq \ERisk_\gamma(\overline{X})$.
While there is no strict \enquote{cut-off} as for $\CVaR$, we argue that we can still bound the overall difference between $X$ and the bounds $\underline{X}$ and $\overline{X}$ in general.
We have
$$\ERisk_\gamma(\overline{X}) - \ERisk_\gamma(X) = -{1}/{\gamma}\cdot\log(\mathbb{E}(e^{-\gamma \overline{X}}) / \mathbb{E}(e^{-\gamma X})).$$
We now apply two useful general facts about cdfs, namely that
(i)~$\mathbb{E}(X) = \int_x (1 - F_X(x)) \cdot x \,dx$ and
(ii)~for a positive, continuous, strictly decreasing function $f$ we have $F_{f(X)}(x) = 1 - F_X(f^{-1}(x))$.
We get\\
\centerline{
$\mathbb{E}(e^{-\gamma \overline{X}}) / \mathbb{E}(e^{-\gamma X}) = \int \overline{F}(- \frac{1}{\gamma} \log(x)) e^{-\gamma x} \,dx / \mathbb{E}(e^{-\gamma X}).$}\\
Recall that $F_X(x) - \Delta \leq \overline{F}(x)$.
Hence,\\[3pt]
\centerline{
$
\begin{aligned}
\textstyle\int \overline{F}(- \frac{1}{\gamma} \log(x)) e^{-\gamma x} \,dx / \mathbb{E}(e^{-\gamma X})
&\textstyle\geq \int (F_X(- \frac{1}{\gamma} \log(x)) - \Delta) e^{-\gamma x} \,dx / \mathbb{E}(e^{-\gamma X})\\[-3pt]
&\textstyle= 1 - \Delta \int e^{-\gamma x} \, dx / \mathbb{E}(e^{-\gamma X}).
\end{aligned}
$}\\
Consequently, for $\Delta \to 0$ this expression converges to $1$, and thus $\ERisk_\gamma(\overline{X}) - \ERisk_\gamma(X) = -\frac{1}{\gamma} \log({\dots}) \to 0$.
The proof for $\underline{X}$ is analogous.

\section{Tool Implementation}
\label{sec:Implementation}

The \tool{modes} SMC tool~\cite{BDHS20} was recently extended with sound statistical methods for estimating means, including the DKW~\cite{BHMWW25}.
Now, in version 3.1.287, we added syntax for quantile and CVaR properties to the parsers for its input languages, \lang{Modest}~\cite{HHHK13,BDHK06} and \lang{Jani}~\cite{BDHHJT17}, and extended its implementation of the DKW to estimate and provide bounds for such properties. %
Additionally, empirical cdfs can be exported to CSV and Excel files for plotting and further analysis.

\begin{table}[t]
\centering
\caption{Estimates and DKW confidence intervals for the examples' properties.}
\label{tab:PropertyResults}
\setlength{\tabcolsep}{6pt}
\begin{tabular}{lrrcrcrc}
\toprule
&& \multicolumn{2}{c}{expected value} & \multicolumn{2}{c}{$0.3$-quantile} & \multicolumn{2}{c}{$\CVaR_{0.3}$} \\
\cmidrule(lr){3-4}\cmidrule(lr){5-6}\cmidrule(lr){7-8}
example & $k$ & $\hat{X}$ & conf.\ int. & est. & conf.\ int. & est. & conf.\ int.\\\midrule
\textit{coupon}           &  $100$ & $13.11$ &   $[10.08, \infty)$ & $10$    & $\pz[9,   11]$   & $8.90$ & $[4.56,   10.14]$\\
                          & $1000$ & $12.90$ &   $[11.82, \infty)$ & $10$    &   $[10,   11]$   & $8.71$ & $[7.28, \pz9.21]$\\[1.5pt]
\textit{leader\!\_\!sync} &  $100$ &  $1.17$ & \pz$[0.89, \infty)$ &  $1$    & $\pz[1, 1]\pz$   & $1.00$ & $[0.55, \pz1.00]$\\[1.5pt]
\textit{embedded}         &  $100$ &  $0.35$ &       n/a           &  $0.13$ & $[0.10, 0.20]$   & $0.35$ &      n/a       \\
                          & $1000$ &  $0.33$ &       n/a           &  $0.17$ & $[0.15, 0.19]$   & $0.33$ &      n/a       \\
\bottomrule
\end{tabular}
\end{table}

To demonstrate the new tool features, we use three examples from the QVBS selected for diversity in cdfs:
(1)~the \textit{coupon} model with parameters $\texttt{N} = 15$, $\texttt{DRAWS} = 4$, $\texttt{B} = 5$ (a DTMC of 17 billion states, to which SMC is agnostic) and the random variable underlying property \textit{exp\_\!draws};
(2)~\textit{leader\!\_\!sync} with $\texttt{N} = 5$, $\texttt{K} = 4$ (DTMC, 4244 states) and \textit{time}; and
(3)~\textit{embedded} with $\texttt{MAX\_COUNT} = 8$, $\texttt{T} = 12$ (a continuous-time Markov chain of 8548 states) and \textit{danger\!\_\!time}.
The original properties query for expected reachability rewards; we add properties querying for the $0.3$-quantile and $\CVaR_{0.3}$ of the same reward specification, \ie the same random variable on paths.
We run \tool{modes} on each example with $k = 100$ simulations, and on \textit{coupon} and \textit{embedded} additionally with $k = 1000$.
In addition to obtaining DKW-based confidence intervals, we use \tool{modes}' new \texttt{-{}-cdf} parameter to export empirical CDFs with DKW confidence bands.

In \Cref{tab:PropertyResults}, we show the results that \tool{modes} obtains for the properties.
As reachability rewards fall into the general case, we can only obtain lower bounds for the expected values~\cite{BHMWW25}.
For quantiles and CVaR, as per \Cref{sec:ObtainingBounds}, the DKW allows us to obtain (for CVaR effective) lower and upper bounds.
On \textit{embedded}, \tool{modes} cannot apply the DKW to expectation and CVaR because its syntactic procedure to find a lower bound for the rewards fails as they are encoded via an unbounded real-valued variable.
For quantiles, the absence of bounds on the distribution is no hindrance.
The DKW can produce rather asymmetric confidence intervals, which we see for \textit{leader\!\_\!sync}'s CVaR property.

\newcommand*{\GetMinMax}[4]{%
    \pgfmathsetmacro#3{-16383}%
    \pgfmathsetmacro#4{16383}%
    \pgfplotstableforeachcolumnelement{[index]#2}\of{#1}\as\cellValue{%
        \ifx\cellValue\@empty\else
            \pgfmathsetmacro{#3}{max(#3,\cellValue)}%
            \pgfmathsetmacro{#4}{min(#4,\cellValue)}%
        \fi
    }
}
\newcommand{\cdfplot}[3]{
\pgfplotstableread[col sep=comma]{#1}\cdfplottable
\GetMinMax{\cdfplottable}{0}{\cdfplottablemax}{\cdfplottablemin}
\pgfplotsset{
  width=4.67cm,
  every axis plot/.append style={mark=none},
  every tick label/.append style={font=\tiny}
}
\begin{tikzpicture}
\begin{axis}[
  xmin=0, xmax=\cdfplottablemax, ymin=0, ymax=1,
  enlargelimits=0,
  legend style={font=\scriptsize,at={(1.36,1)},anchor=north west}, legend cell align={left},
  xlabel={#2}
]
\addplot[DarkOrchid, const plot, line width=1pt] table [x index=0, y index=3] {\cdfplottable};
\addplot[BurntOrange, const plot, line width=1pt] table [x index=0, y index=2] {\cdfplottable};
\addplot[BlueGreen, const plot, line width=1pt] table [x index=0, y index=1] {\cdfplottable};
\addplot[BrickRed] coordinates { (0,0.3) (\cdfplottablemax,0.3) };
#3
\end{axis}
\end{tikzpicture}
}

\begin{figure}[t]
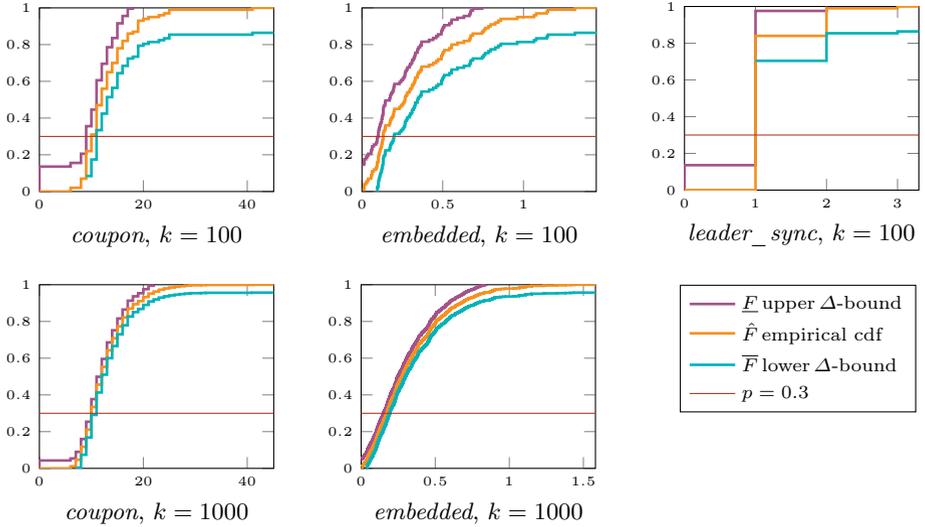

\centering
\cdfplot{experiments/coupon.15-4-5.100.csv}{\textit{coupon}, $k=100$}{}
\cdfplot{experiments/embedded.8-12.100.csv}{\textit{embedded}, $k=100$}{}
\cdfplot{experiments/leader_sync.5-4.100.csv}{\textit{leader\_sync}, $k=100$}{}\\[2mm]
\cdfplot{experiments/coupon.15-4-5.1000.csv}{\textit{coupon}, $k=1000$}{}
\cdfplot{experiments/embedded.8-12.1000.csv}{\textit{embedded}, $k=1000$}{
  \legend{$\underline{F}\!$ upper $\!\Delta$-bound,$\hat{F}\!$ empirical cdf,$\overline{F}\!$ lower $\!\Delta$-bound,$p = 0.3$}
}\vspace{-3pt}
\caption{DKW CDF confidence bands obtained on the example benchmarks.}
\label{fig:CdfBenchmarks}
\end{figure}

We plot the empirical cdfs and associated DKW confidence bounds \tool{modes} delivered in \Cref{fig:CdfBenchmarks}.
Graphically, the confidence interval for the quantile is the $p=0.3$ line's segment between the bound curves, while the CVaR estimate and confidence interval stem from the curves cut off at that line.
The DTMCs' reward distributions are necessarily discrete:
\textit{leader\!\_\!sync} has only $3$ possible outcomes of non-negligible probability (we thus omit $k=1000$);
increasing $k$ for \textit{coupon} does not smoothen the curve much, because most elements of the distribution's support were already sampled at $k=100$---only the confidence band gets much thinner.
For the CTMC \textit{embedded}, the distribution is visibly continuous.

We do not report runtimes because the overhead of using the DKW---for collecting all samples to finally compute the intervals, instead of incremental averaging plus evaluation based on $k$ and $\delta$ only as for traditional methods for the mean---was negligible in these experiments.
Aside from DTMCs and CTMCs, \tool{modes} also supports more complex and expressive formalisms up to stochastic hybrid automata~\cite{FHHWZ11}; the new methods to check and bound quantiles and CVaRs work independent of the model type.
They equally combine orthogonally with \tool{modes}' features for rare event simulation~\cite{BDH19} as well as learning and scheduler sampling for nondeterministic models~\cite{DHS18,DFHR25,NHR21}.

\section{Conclusion}

In this work, we have shown how the DKW inequality can be used to derive bounds on various aggregation functions beyond the classical expectation/mean, closing a significant gap of SMC compared to traditional verification.
Moreover, as all our estimations are based on the DKW inequality, our methods can estimate all values simultaneously.
Our experimental evaluation confirms the effectiveness of our methods, allowing for scalable estimation of such aggregation values for large systems.
For future work, we believe that our approach should also be applicable to other risk measures such as variance~\cite{DBLP:conf/icml/MannorT11,DBLP:journals/jcss/BrazdilCFK17}, variance-penalized expected payoff~\cite{markowitz1991foundations,DBLP:conf/icalp/PiribauerSB22,DBLP:journals/eor/MaMX23,DBLP:conf/concur/BaierPS24}, or cumulative prospect theory~\cite{CPT92,BCMW25}.

\medskip
\noindent
\textbf{Data availability.}
The \tool{modes} tool is available at \href{https://www.modestchecker.net/Downloads/}{modestchecker.net}.
An artifact for this paper---a reproduction package with the models and commands for the experiments in \Cref{sec:Implementation}---is available at DOI\:\,\href{https://zenodo.org/records/15286509}{10.5281/zenodo.15286509}.

\bibliographystyle{splncs04}
\bibliography{paper}

\iftoggle{arxiv}{
    \newpage
    \appendix    
    \section{Example: Computing Quantities of Interest} \label{sec:app-example}

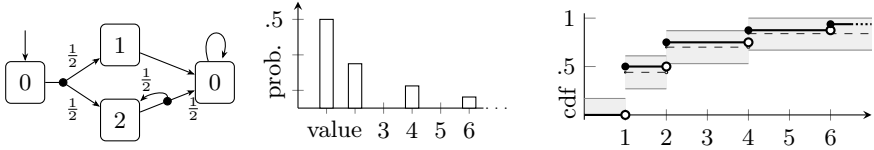
\begin{figure}[t]
    \centering%
\begin{minipage}{.275\textwidth}
\centering
\begin{tikzpicture}[auto,xscale=1.25]
    \node[state,initial above,initial text=] at (0,0) (init) {$0$};
    \node[state] at (1,.5) (s1) {$1$};
    \node[state] at (1,-.5) (s2) {$2$};
    \node[state] at (2,0) (sf) {$0$};
    \node[actionnode] at (.4,0) (inita) {};
    \node[actionnode] at (1.5,-.25) (s2a) {};

    \path[actionedge]
        (init) edge (inita)
        (s2) edge (s2a)
    ;
    \path[directedge]
        (s1) edge (sf)
        (sf) edge[loop above] (sf)
    ;
    \path[probedge]
        (inita) edge node[prob] {$\tfrac{1}{2}$} (s1)
        (inita) edge[swap] node[prob] {$\tfrac{1}{2}$} (s2)
        (s2a) edge[out=90,in=45,swap] node[prob] {$\tfrac{1}{2}$} (s2)
        (s2a) edge node[prob,swap] {$\tfrac{1}{2}$} (sf)
    ;
\end{tikzpicture}
\end{minipage}%
\begin{minipage}{.3\textwidth}%
\centering%
\begin{tikzpicture}
    \begin{axis}[width=1.1\textwidth,height=3cm,
        xmin=0, xmax=6.5, ymin=0, ymax=0.6, ybar,
        bar width=5pt, bar shift=0pt,
        axis x line=middle,
        axis y line=middle,
        xtick={1,2,3,4,5,6},
        ytick={0.1,0.3,0.5},
        yticklabels={,,.5},
        xticklabels={,,3,4,5,6,},
        x label style={at={(axis description cs:0,0)},anchor=north west,font=\small},
        y label style={at={(axis description cs:0,0)},rotate=90,anchor=south west,font=\small},
        xlabel={\strut value},
        ylabel={\strut prob.},
        clip mode=individual,
        x axis line style=-
    ]
        \addplot [draw=black,fill=white] coordinates {(1,0.5) (2,0.25) (4,0.125) (6,0.0625)};
        \node at (axis description cs:1.08,-.0) {\tiny$\dots$};
    \end{axis}
\end{tikzpicture}
\end{minipage}%
\begin{minipage}{.4\textwidth}%
\centering%
\begin{tikzpicture} %
    \begin{axis}[axis x line=middle, axis y line*=middle,
            height=3cm, width=1.1\textwidth,
            x label style={at={(axis description cs:0,0)},anchor=north west,font=\small},
            y label style={at={(axis description cs:0,0)},anchor=south west,font=\small},
            xmin=0,xmax=7,xtick={1,2,3,4,5,6},xlabel={},
            ymin=-.05,ymax=1.05,ytick={0.5,1},yticklabels={.5,1},ylabel={\hspace{-3pt}cdf}]

\draw[dashed] (axis cs:0,0.00) node[discontinuity-value-small] {} -- (axis cs:1,0.00) node[discontinuity-limit-small] {};
\draw[dashed] (axis cs:1,0.44) node[discontinuity-value-small] {} -- (axis cs:2,0.44) node[discontinuity-limit-small] {};
\draw[dashed] (axis cs:2,0.70) node[discontinuity-value-small] {} -- (axis cs:4,0.70) node[discontinuity-limit-small] {};
\draw[dashed] (axis cs:4,0.84) node[discontinuity-value-small] {} -- (axis cs:8,0.84) node[discontinuity-limit-small] {};
\draw[dashed] (axis cs:8,0.92) node[discontinuity-value-small] {} -- (axis cs:16,0.92) node[discontinuity-limit-small] {};

\draw[gray] (axis cs:0,0.17) -- (axis cs:1,0.17);
\draw[gray] (axis cs:1,0.61) -- (axis cs:2,0.61);
\draw[gray] (axis cs:2,0.87) -- (axis cs:4,0.87);
\draw[gray] (axis cs:4,1.00) -- (axis cs:8,1.00);
\draw[gray] (axis cs:8,1.00) -- (axis cs:16,1.00);

\draw[gray] (axis cs:0,0.00) -- (axis cs:1,0.00);
\draw[gray] (axis cs:1,0.27) -- (axis cs:2,0.27);
\draw[gray] (axis cs:2,0.53) -- (axis cs:4,0.53);
\draw[gray] (axis cs:4,0.67) -- (axis cs:8,0.67);
\draw[gray] (axis cs:8,0.75) -- (axis cs:16,0.75);

\fill[lightgray,fill opacity=.25] (axis cs:0,0.00) -- (axis cs:1,0.00) -- (axis cs:1,0.17) -- (axis cs:0,0.17);
\fill[lightgray,fill opacity=.25] (axis cs:1,0.27) -- (axis cs:2,0.27) -- (axis cs:2,0.61) -- (axis cs:1,0.61);
\fill[lightgray,fill opacity=.25] (axis cs:2,0.53) -- (axis cs:4,0.53) -- (axis cs:4,0.87) -- (axis cs:2,0.87);
\fill[lightgray,fill opacity=.25] (axis cs:4,0.67) -- (axis cs:8,0.67) -- (axis cs:8,1.00) -- (axis cs:4,1.00);
\fill[lightgray,fill opacity=.25] (axis cs:8,0.75) -- (axis cs:16,0.75) -- (axis cs:16,1.00) -- (axis cs:8,1.00);

\draw[thick] (axis cs:0,0) -- (axis cs:1,0) node[discontinuity-limit] {};
\draw[thick] (axis cs:1,.5) node[discontinuity-value] {} -- (axis cs:2,.5) node[discontinuity-limit] {};
\draw[thick] (axis cs:2,.75) node[discontinuity-value] {} -- (axis cs:4,.75) node[discontinuity-limit] {};
\draw[thick] (axis cs:4,.875) node[discontinuity-value] {} -- (axis cs:6,.875) node[discontinuity-limit] {};
\draw[thick] (axis cs:6,.9375) node[discontinuity-value] {} -- (axis cs:6.5,.9375);
\draw[thick,densely dotted] (axis cs:6.5,.9375) -- (axis cs:7,.9375);
    \end{axis}
\end{tikzpicture}
\end{minipage}%
\caption{
    Example of a DTMC (left) together with the probability distribution over possible reward outcomes (middle) and the corresponding cdf (right, solid line).
    The states in the DTMC are (only) labelled by their rewards.
    The right figure also includes an empirical cdf (dashed) and corresponding confidence band (gray) obtained from the DKW inequality (with $\delta = 0.1$ and $k = 50$).
}\label{fig:example_app}
\end{figure}

For convenience, we repeat \Cref{fig:example} here in \Cref{fig:example_app}.
In this appendix, we show how compute the quantities of interest listed in \Cref{sec:prelims} for a given cdf.
We highlight that while the example uses a discrete random variable, the aggregators are also applicable in the general case; in the same vein, our theory developed in \Cref{sec:Estimates} also applies to general random variables (i.e.\ discrete, continuous, or mixed).
We denote by $X$ the random variable and by $\domain(X)$ its domain; in the example it is (a subset of) the natural numbers.
For $x\in\domain(X)$ we write $p(x)$ for $\mathbb{P}(X=x)$.

\begin{description}
    \item[Mean] Recall that $\mathbb{E}(X) = \int X(\omega) \,d\mathbb{P}(\omega)$.
    As our example comprises a discrete random variable, we simply compute the weighted sum $\mathbb{E}(X) = \sum_{x\in \domain(X)} x \cdot p(x) = 0.5\cdot 1 + \sum_{i=1}^\infty 2i \cdot (\frac 1 2)^{i+1} = 2.5$.
    
    \item[Higher moments] are computed as the mean of $X^n$.
    Essentially, we rescale every outcome by taking its $n$-th power. 
    For example, with $n = 2$ we get $\mathbb{E}(X^2) = \sum_{x\in \domain(X)} x^2 \cdot p(x) = 0.5 \cdot 1^2 + \sum_{i=1}^\infty (2i)^2 \cdot (\frac 1 2)^{i+1} = 12.5$.
    
    \item[Quantiles] Recall that $\Quant_t(X) = \inf \{v \mid \mathbb{P}(X \leq v) \geq t\}$ for $t\in (0,1)$.
    Intuitively, we take the highest value of the worst $\frac t {100}$ \% outcomes.
    We note that quantiles are sometimes also defined by \enquote{partitioning} the outcomes into $n$ different blocks and then taking the $k$-the value, for example \enquote{the second 20\% quantile}, which is equal to $t = 0.4$ in our definition, i.e.\ $t = \frac{k}{n}$ in general.
    
    Choosing $t \in (0,0.5]$, we get $\Quant_t(X) = 1$, since that is the smallest value with a positive probability.
    Then we observe the non-continuity of quantiles that is mentioned as a complication in \cref{sec:ObtainingBounds}: 
    Choosing $t \in (0.5,0.75]$, we obtain $\Quant_t(X) = 2$.
    Quantiles can easily be read off from the cdf by finding the smallest (leftmost) value $x$ such that the cdf is above $t$.
    
    \item[Conditional Value-at-Risk]
    Formally, CVaR is defined as:
    $\CVaR_t(X) = \frac{1}{t} (P \cdot \mathbb{E}[X \mid X < v] + (t - P) \cdot v)$, where $t \in (0,1)$, $v = \Quant_t(X)$ and $P = \mathbb{P}(X < v)$.
    Intuitively, this means taking the expectation over the $t$-quantile.
    The reason this definition looks surprisingly complicated is that it has to account for the probability mass exactly at $v$ potentially only being partially included.
    
    In our example, for $t=0.75$, we have $v=2$ and $P= 0.5$, yielding 
    $\CVaR_{0.75}(X) = \frac{1}{0.75} \cdot (0.5\cdot 1 + 0.25 \cdot 2)$.
    However, choosing $t=0.7$, observe that we only want to consider $0.2$ of the outcome $2$, even though $p(2) = 0.25$, reflected by $(t - P) \cdot v = 0.2 \cdot 2$.
    
    \item[Entropic Risk] Entropic risk is defined as $\ERisk_\gamma(X) = - \frac{1}{\gamma} \log(\mathbb{E}(e^{-\gamma X}))$, with $\gamma > 0$.
    This function first rescales the obtained reward using the exponential function $e^{-\gamma X}$, takes the expectation of the exponentially rescaled rewards, and then re-normalizes the value by $-\gamma \log(\dots)$.
    We refer to~\cite{BCMP24} for a more detailed explanation.

    In our example, for $\gamma = 2$, we get 
    \begin{align*}
    \ERisk_1(X) & = - \tfrac{1}{2} \log({\sum}_{x \in \domain(X)} e^{-2x} \cdot p(x)) \\ 
        & = -\tfrac{1}{2} \log(e^{-2} \cdot 0.5 + {\sum}_{i=1}^\infty e^{-4i} \cdot (\tfrac{1}{2})^{i+1}) \approx 1.35.
    \end{align*}

\end{description}

\section{Sequential DKW}
\label{sec:Sequential}

In the main body, we considered the problem of deriving \enquote{as good as possible} bounds given a sample budget (or an already gathered set of samples).
However, as mentioned in \cref{rem:dual}, sometimes we are also interested in the \emph{sequential} setting, where the goal is to gather samples until a certain precision can be guaranteed.
As a naive approach, we could try to derive an a-priori upper bound on the number of samples required.
For example, \cite{BHMWW25} notes that such bounds for estimating the mean can be derived by using the fact that DKW reduces to Hoeffding's inequality in the worst case.
However, far fewer samples might be sufficient in case the variance of the sample data is low.

Additionally, the \enquote{DKW-Lower} procedure introduced in \cite{BHMWW25} gives a lower bound for the expected value of an unbounded reward, again for a fixed set of samples.
It is shown that DKW-Lower ``converges'' in the sense that the lower bound can be arbitrarily close to the true expected value if a large enough sample set is chosen.

Intuitively, one might think it is possible to derive a simple sequential procedure by repeatedly computing bounds via the DKW inequality until a satisfactory confidence interval is achieved.
However, this simple approach to build sequential procedures is no longer sound:
The probability that \emph{any} confidence interval is incorrect accumulates and is no longer guaranteed to be $\leq\delta$. 
Even if we are only interested in the final confidence interval in the sequence, the guarantees of the DKW inequality no longer apply since the stopping condition is not independent of the outcome of the sampling process, as discussed in \cite{BHMWW25}.

In this section, we now aim to define a sequential procedure that retains soundness while converging to the true expected value.
The core idea is to define certain stages at which we build confidence intervals and split the confidence budget $\delta$ over all stages in such a way that the probability of all stages yielding a correct confidence interval is $\geq 1-\delta$ while at the same time ensuring that the sample count per stage grows fast enough to counteract the growth of the confidence intervals due to decreasing $\delta$.
Further, we show that this sequential procedure extends to a very general class of objectives beyond expected rewards.

Formally, let $\mathcal{F}$ be an aggregation.
(Technically, when we write $\mathcal{F}(F)$ for a cdf $F$, we refer to computing the aggregation of the associated random variable.)
Like in the DKW inequality, for $\varepsilon\geq 0$ we define the confidence band
\[ \hat{F}_\varepsilon=\left\{F'(x)\mid {\sup}_{x\in\RRpos}\lvert F(x)-F'(x)\rvert \leq \varepsilon\right\}.\]
and derive the confidence interval $C_\varepsilon$ as
\[ C_\varepsilon=[\underline{C}_\varepsilon, \overline{C}_\varepsilon]=\left[ {\inf}_{F\in\hat{F}_\varepsilon} \mathcal{F}(F), {\sup}_{F\in\hat{F}_\varepsilon} \mathcal{F}(F)\right]. \]
Given an infinite stream of random variables $\mathcal{X}=X_1,X_2,\dots$ and confidence level $1-\delta$, we define, we define the procedure ``Sequential-DKW'' as follows:
We choose an $n\in\mathbb{N}$ and define $n_i = n i^2$ as well as $\delta_i=\frac{\delta}{2^i}$ for all $i\geq 0$.
We also define $\Xi_i=\{X_1,\dots,X_{n_i}\}$ with the corresponding eCDF $F_i$.
Finally, we define the sequence of confidence intervals output by Sequential-DKW as $(C_i)_{i\geq 0}$ where $C_i=C_{\varepsilon_i}$ with $\varepsilon_i=\sqrt{\frac{\ln(2/\delta_i)}{2n_i}}$.

\begin{theorem}\label{thm:sequential-dkw}
    For any aggregation $\mathcal{F}$, stream of random variables $\mathcal{X}$ drawn from a distribution with cdf $F_X$ and $\delta\in(0,1)$, the probability that all confidence intervals $C_i$ produced by Sequential-DKW are correct is at least $1-\delta$, i.e.
    \[ \mathbb{P}\left( \forall i\geq 0. \mathcal{F}(X) \in C_i \right) \geq 1-\delta. \]
    Further, if $C_\varepsilon$ is always continuous at $\varepsilon=0$, we have 
    \[ {\lim}_{i\rightarrow\infty} \overline{C}_i = {\lim}_{i\rightarrow\infty}\underline{C}_i = \mathcal{F}(F_X).\]
\end{theorem}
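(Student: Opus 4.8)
The statement has two parts. The first is a soundness claim---all the intervals $C_i$ are correct simultaneously with probability at least $1-\delta$---which I would establish by a plain union bound over the per-stage DKW failure events, with no independence assumption. For stage $i$, let $G_i$ denote the event $\sup_x|F_i(x)-F_X(x)|\le\varepsilon_i$; since $\varepsilon_i=\sqrt{\ln(2/\delta_i)/(2n_i)}$ is precisely the DKW radius for sample size $n_i$ and budget $\delta_i$, the DKW inequality (\cref{sec:dkw}) gives $\mathbb{P}(G_i^c)\le\delta_i$, and for $i=0$ we have $n_0=0$, so $\varepsilon_0=\infty$, the band is the set of all cdfs, $C_0$ is the trivial interval, and $G_0^c=\emptyset$. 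On $G_i$ the true cdf $F_X$ lies in the band $\hat F_{\varepsilon_i}$ centred at $F_i$, hence $\mathcal{F}(X)=\mathcal{F}(F_X)\in[\inf_{F\in\hat F_{\varepsilon_i}}\mathcal{F}(F),\ \sup_{F\in\hat F_{\varepsilon_i}}\mathcal{F}(F)]=C_i$. Thus the event that some $C_i$ is incorrect is contained in $\bigcup_{i\ge 1}G_i^c$, whose probability is at most $\sum_{i\ge 1}\delta_i=\sum_{i\ge 1}\delta/2^i=\delta$; this gives the first claim.

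For the convergence claim I would argue in three steps, writing $C_\varepsilon(F')=[\underline C_\varepsilon(F'),\overline C_\varepsilon(F')]$ for the interval obtained from the radius-$\varepsilon$ band centred at a cdf $F'$. (1)~A deterministic computation: $\varepsilon_i^2=\ln(2/\delta_i)/(2n_i)=\bigl((i{+}1)\ln 2+\ln(1/\delta)\bigr)/(2ni^2)=O(1/i)$, so $\varepsilon_i\to 0$; this is exactly why the schedule uses the quadratic growth $n_i=ni^2$---it must outpace the linear growth $\ln(2/\delta_i)=\Theta(i)$ created by geometrically shrinking the per-stage budget. (2)~Since $\sum_i\mathbb{P}(G_i^c)\le\sum_i\delta_i<\infty$, the Borel--Cantelli lemma gives that almost surely only finitely many $G_i$ fail, so almost surely there is an $I$ with $r_i\defeq\sup_x|F_i(x)-F_X(x)|\le\varepsilon_i$ for all $i\ge I$. (3)~On such a realization and for $i\ge I$: $F_X$ lies in the band centred at $F_i$, so $\underline C_i\le\mathcal{F}(X)\le\overline C_i$; and by the triangle inequality every cdf within sup-distance $\varepsilon_i$ of $F_i$ is within $2\varepsilon_i$ of $F_X$, so the band centred at $F_i$ is contained in the radius-$2\varepsilon_i$ band centred at $F_X$, giving $\underline C_{2\varepsilon_i}(F_X)\le\underline C_i$ and $\overline C_i\le\overline C_{2\varepsilon_i}(F_X)$. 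Combining, $\underline C_{2\varepsilon_i}(F_X)\le\underline C_i\le\mathcal{F}(X)\le\overline C_i\le\overline C_{2\varepsilon_i}(F_X)$; since $2\varepsilon_i\to 0$ and $\varepsilon\mapsto C_\varepsilon$ is, by hypothesis, continuous at $\varepsilon=0$ (where its value is $\{\mathcal{F}(X)\}$), the outer endpoints both converge to $\mathcal{F}(X)$, and the squeeze yields $\lim_i\underline C_i=\lim_i\overline C_i=\mathcal{F}(X)=\mathcal{F}(F_X)$ almost surely.

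I expect the convergence part to be the crux, for the two reasons isolated above: one must check that the $(n_i,\delta_i)$ schedule genuinely forces $\varepsilon_i\to 0$---the whole point of taking $n_i$ quadratic in $i$ rather than, say, linear---and one must cope with the bands being centred at the moving empirical cdf $F_i$ rather than at $F_X$, which is what the triangle-inequality inclusion plus the Borel--Cantelli ``eventually $r_i\le\varepsilon_i$'' step handle. Soundness is comparatively routine: a single, independence-free union bound over the DKW failure events suffices, the vacuous $i=0$ stage contributing nothing. (The continuity hypothesis is also exactly what rules out degenerate aggregations such as the general-case mean, for which $\overline C_\varepsilon(F_X)=\infty$ for all $\varepsilon>0$.)
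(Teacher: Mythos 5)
Your proof is correct and follows the same overall decomposition as the paper's (union bound over per-stage DKW failures for soundness; vanishing band radius plus the continuity hypothesis for convergence), but the convergence half is executed by a genuinely different route. The paper bounds the \emph{width} $\overline{C}_i-\underline{C}_i$ directly, by picking near-optimal cdfs in $\hat F_{\varepsilon_i}$ and observing their sup-distance is at most $2\varepsilon_i$, and then separately invokes the law of large numbers ($F_i\to F_X$) together with continuity of $C_\varepsilon$ at $0$ to pin the common limit to $\mathcal{F}(F_X)$. You instead use Borel--Cantelli to get that almost surely the DKW event holds for all large $i$, and on that event sandwich $C_i$ between $\mathcal{F}(F_X)$ and the radius-$2\varepsilon_i$ band centred at the \emph{true} cdf via the triangle inequality, so that a single appeal to continuity (of $\varepsilon\mapsto C_\varepsilon(F_X)$ at $0$) closes the squeeze. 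Your version is somewhat more self-contained and makes explicit both where the almost-sure qualifier on the limit comes from and why the $n_i=ni^2$ schedule beats the $\Theta(i)$ growth of $\ln(2/\delta_i)$; the paper's version separates ``widths shrink'' from ``centre converges,'' which is conceptually clean but leans on an informal Glivenko--Cantelli-style step. One further point in your favour: by noting that the $i=0$ stage is vacuous ($n_0=0$, so $\varepsilon_0=\infty$ and $C_0$ is trivial) and summing the failure probabilities only over $i\geq 1$, you get exactly $\sum_{i\geq 1}\delta/2^i=\delta$, whereas the paper sums from $i=0$ and asserts $\sum_{i= 0}^\infty\delta/2^i=\delta$, which is actually $2\delta$; your bookkeeping is what makes the stated bound come out right.
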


\begin{proof}
    By the DKW inequality, he have 
    \begin{align*}
        \mathbb{P}\left( \mathcal{F}(F_X) \notin C_i \right) & \leq \mathbb{P}\left( F_X \notin \hat{F}_{\varepsilon_i} \right) \\
        & = \mathbb{P}\left( {\sup}_{x\in\RRpos} |F_X - F_i| \geq \sqrt{\ln(2/\delta_i) / (2n_i)} \right) \\
        & \leq 2e^{-2n_i\sqrt{\ln(2/\delta_i) / (2n_i)}^2} \\
        & = \delta_i
    \end{align*}
    By definition of $\delta_i$ and the union bound we have
    \begin{equation*}
        \mathbb{P}\left( \forall i\geq 0. \mathcal{F}(F_X) \in C_i \right) \leq {\sum}_{i=0}^\infty\mathbb{P}\left( \mathcal{F}(F_X) \notin C_i \right) \leq {\sum}_{i=0}^\infty \delta_i \leq {\sum}_{i=0}^\infty \frac{\delta}{2^i} = \delta
    \end{equation*}
    To show the second part of the theorem, note that for all $i\geq 0$, by definition of $C_{\varepsilon_i}$, there is a sequence $(\underline{F}_i^j)_{j\geq 0}$ for which all $\underline{F}_i^j\in\hat{F}_{\varepsilon_i}$ and $\lim_{j\rightarrow\infty}\mathcal{F}(\underline{F}_i^j)=\underline{C}_i$ and and analogous sequence $(\underline{F}_i^j)_{j\geq 0}$.
    Then, since $\underline{F}_i^j,\overline{F}_i^j\in\hat{F}_{\varepsilon_i}$ for all $j\geq 0$ and by definition of $n_i$ we have
    \begin{multline*}
        {\sup}_{x\in\mathbb{R}}\left({\lim}_{j\rightarrow\infty}\overline{F}_i^j(x)-{\lim}_{j\rightarrow\infty}\underline{F}_i^j(x)\right) \leq 2\sqrt{\ln(2/\delta_i) / (2n_i)} \\
        = 2\sqrt{\ln(2^{i+1}/\delta) / (2ni^2)}
        = 2\sqrt{(i+1)\ln(2/\delta) / (2ni^2)}
    \end{multline*}
    Thus, since $\lim_{i\rightarrow \infty}\varepsilon_i=0$ and $C_\varepsilon$ is continuous at $\varepsilon=0$, we have 
    \[
        {\lim}_{i\rightarrow\infty}\overline{C}_i-\underline{C}_i = {\lim}_{i\rightarrow\infty}\left({\lim}_{j\rightarrow\infty}\mathcal{F}(\overline{F}_i^j)-{\lim}_{j\rightarrow\infty}\mathcal{F}(\underline{F}_i^j)\right)=0,
    \]
    i.e.\ $\lim_{i\rightarrow\infty}\overline{C}_i=\lim_{i\rightarrow\infty}\underline{C}_i$.
    Finally, by the law of large numbers $\lim_{i\rightarrow\infty}F_i=F_X$ and thus again by $\lim_{i\rightarrow \infty}\varepsilon_i=0$ and continuity of $C_\varepsilon$ at $\varepsilon=0$ we have $\lim_{i\rightarrow\infty}\overline{C}_i=\lim_{i\rightarrow\infty}\underline{C}_i=\mathcal{F}(F_X)$.
\end{proof}

\begin{remark}
    Sequential-DKW naturally extends to bounded random variables (e.g.\ positive expected reward with an a priori upper bound) where a $\leq X \leq b$ for all $X\in\mathcal{X}$ by additionally requiring $F(a)=0$ and $F(b)=1$ for all $F\in\hat{F}_{\varepsilon_i}$ for all $i\geq 0$.
\end{remark} %

\begin{remark}
    If only $\underline{C}_i$ is continuous at $\varepsilon=0$ we can still obtain convergence for the lower bound, i.e.\ $\lim_{i\rightarrow\infty}\underline{C}_i=\mathcal{F}(F_X)$, in the same way as for \Cref{thm:sequential-dkw}.
    The analogous statement holds for $\overline{C}_i$.
    This is for example relevant for unbounded expected rewards (where $\overline{C}_\varepsilon$ is always finite for $\varepsilon=0$ but infinite for $\varepsilon>0$).
    There, $(\underline{C}_i)_{i\geq0}$ as in Sequential-DKW is a sound sequence of lower bounds converging towards the true expected reward, but Sequential-DKW likely does not yield converging upper bounds since $\overline{C}_i=\infty$ for all $i\geq 0$.
    Similarly, for VaR $\overline{C}_\varepsilon$ may not always be continuous at $\varepsilon=0$ for discrete distributions but $\underline{C}_i$ is.
\end{remark}

}
{
}

\end{document}